\newtheorem{lemma}{Lemma}
\newtheorem{thm}{Theorem}
\begin{document}
%
\title{Quickest Change Point Detection and Identification Across a Generic Sensor Array}
\author{\IEEEauthorblockN{Di Li}
\IEEEauthorblockA{Dept. of ECE\\
Texas A\&M University\\
College Station, TX 77843, USA} \and \IEEEauthorblockN{Lifeng Lai}
\IEEEauthorblockA{Dept. of ECE\\
Worcester Polytechnic Institute\\
Worcester, MA 01605, USA} \and \IEEEauthorblockN{Shuguang Cui}
\IEEEauthorblockA{Dept. of ECE\\
Texas A\&M University\\
College Station, TX 77843, USA}}

\maketitle

\begin{abstract}
In this paper, we consider the problem of quickest change point
detection and identification over a linear array of $N$ sensors,
where the change pattern could first reach any of these sensors, and
then propagate to the other sensors. Our goal is not only to detect
the presence of such a change as quickly as possible, but also to
identify which sensor that the change pattern first reaches. We
jointly design two decision rules: a stopping rule, which determines
when we should stop sampling and claim a change occurred, and a
terminal decision rule, which decides which sensor that the change
pattern reaches first, with the objective to strike a balance among
the detection delay, the false alarm probability, and the false
identification probability. We show that this problem can be
converted to a Markov optimal stopping time problem, from which some
technical tools could be borrowed. Furthermore, to avoid the high
implementation complexity issue of the optimal rules, we develop a
scheme with a much simpler structure and certain performance
guarantee.
\end{abstract}

\IEEEpeerreviewmaketitle

\section{Introduction}
The standard quickest change detection problem is set to detect some
unknown time point at which certain signal probability distribution
changes over a sequence of observations. Recently, with the
development of wireless sensor networks, multiple sensors can be
deployed to execute the quickest change detection, and the sensors
can send quantized or unquantized observations or certain local
decisions to a control center, who then makes a final
decision~\cite{Veeravalli:Decentralizedquickest,Tartakovsky:Asymptoticallydistributed}.
Most of the existing work is based on an assumption that the
statistical properties of observations at all sensors change
simultaneously. However, in certain scenarios, this assumption may
not hold well. For instance, when multiple sensors are used to
detect the occurrence of the chemical leakage, the sensors that are
closer to the leakage source usually observe the change earlier than
those far away from the source. In such cases, two interesting
problems arise: one is to detect the change as soon as possible; the
other is to identify which sensor is the closest to the source, such
that we could have a first-order inference over the leakage source
location.

As far as we know, currently there are few work studying the case of
change occurring non-simultaneously. In the related work, the
authors in \cite{Hadjiliadis:oneshot} proposed a scheme that each
sensor makes a local decision with the computing burden at the local
sensors, where they did not consider the identification problem. In
\cite{vvv:MarkovArray}, the authors modeled the change propagation
process as a Markov process to derive the optimal stopping rule and
assumed that the change pattern always first reaches a predetermined
sensor, such that the identification problem is ignored. In
\cite{Lai:identification}, the identification problem for the
special case of two sensors was studied, where the sufficient
statistic is proven as a Markov process and a joint optimal stopping
rule and terminal decision rule are proposed.

In this paper, we study the joint change point detection and
identification problem over a linear array of $N$ sensors, where the
change first occurs near an unknown sensor, then propagates to
sensors further away. We assume that all sensors send their
observations to a control center. With the sequential observation
signals, the control center first operates a stopping rule to decide
when to alarm that the change has occurred; then the control center
deploys a terminal decision rule to determine which sensor that the
change pattern reaches first. In our setup, three performance
metrics are of interest: i) detection delay, which is the time
interval between the moment that the change occurs and the moment
that an alarm is raised; ii) false alarm probability, which is the
probability that an alarm is raised before the actual change occurs;
and iii) false identification probability, which is the probability
that the control center does not correctly identify the sensor that
the change pattern first reaches. We apply the Markov optimal
stopping time theory to design the optimal decision rules to
minimize a weighted sum of the above three metrics. Furthermore, we
derive a scheme with a much simpler structure and certain
performance guarantee.

The rest of this paper is organized as follows. In
Section~\ref{sec:model}, we introduce the system model. In
Section~\ref{sec:solution}, we derive the optimal decision rules. In
Section~\ref{sec:approx}, we propose a scheme approximate to the
optimal decision rules with a much lower complexity. In
Section~\ref{sec:num}, we present some numerical results, with
conclusions in Section~\ref{sec:con}.

\section{System Model}\label{sec:model}

We consider a scenario with $N$ sensors constructing a linear array
to monitor the environment, as shown in Fig. \ref{array}. At an
unknown time point, a change occurs at an unknown location and
propagates, where we use change point time $\Gamma_i$ to denote the
time that the change pattern reaches sensor $i$. We further use $S$
to denote the index of the sensor that the change pattern first
reaches. We focus on the Bayesian setup and use $p_i$ to denote the
prior probability of $\{S=i\}$, simply with $p_i=1/N$. Conditioned
on the event that the change pattern first reaches sensor $i$,
$\Gamma_i$ is assumed to bear a geometric distribution
\cite{vvv:MarkovArray,Lai:identification} with parameter $\rho$,
$0<\rho\leq 1$ i.e.,
\begin{equation}
P[\Gamma_i=k|S=i]=\rho(1-\rho)^k,k\geq0,
\end{equation}
where $k$ denotes the discretized time and takes integer values.

We consider the practical factors in the environment, such as the
wind or the blockers, which will affect the propagation speed of the
change. For instance, see in Fig.~\ref{array}, if the direction of
the wind is from the left to the right side in the monitored
scenario, then the propagation of the air pollution will be much
faster at the right side of sensor $S=i$ than that of the left side.
And at the same side, the propagation follows the deterministic
order shown as
\begin{equation}
\left\{ \begin{array}{l}
 i \to i - 1 \to i - 2 \to i - 3...... \\
 i \to i + 1 \to i + 2 \to i + 3...... \\
 \end{array} \right.
\end{equation}

We further assume that after the change patten reaches the first
sensor $i$, for the right side of sensor $i$, it will propagate from
one sensor to another sensor following the geometric propagation
models as
\begin{equation}
P[\Gamma_{j+1}=k_1+k_2|\Gamma_j=k_1,S=i]=\rho_1(1-\rho_1)^{k_{2}},j>i,k_2\geq0,
\end{equation}
while for the left side of sensor $i$, the propagation follows
\begin{equation}
P[\Gamma_{j-1}=k_1+k_2|\Gamma_j=k_1,S=i]=\rho_2(1-\rho_2)^{k_{2}},j<i,k_2\geq0,
\end{equation}
where $\rho_1$ and $\rho_2$ are used to model possibly different
propagation speed along each direction, e.g., $\rho_1 > \rho_2$
means the propagation speed is higher at the right side that that of
the left side.

Taking above assumption, for $S=i$, we define all possible
$i\times(N-i+1)+1$ events at time $k$ as follows:
\[\begin{array}{l}
T_{0,k} = \{ {\Gamma _i} > k,{\Gamma _{i - 1}} > k,{\Gamma _{i + 1}} > k,{\Gamma _{i - 2}} > k,{\Gamma _{i + 2}} > k,...\}  \\
 T_{1,k} = \{ {\Gamma _i} \le k,{\Gamma _{i - 1}} > k,{\Gamma _{i + 1}} > k,{\Gamma _{i - 2}} > k,{\Gamma _{i + 2}} > k,...\}  \\
 T_{2,k} = \{ {\Gamma _i} \le k,{\Gamma _{i - 1}} > k,{\Gamma _{i + 1}} \le k,{\Gamma _{i - 2}} > k,{\Gamma _{i + 2}} > k,...\}  \\
 T_{3,k} = \{ {\Gamma _i} \le k,{\Gamma _{i - 1}} > k,{\Gamma _{i + 1}} \le k,{\Gamma _{i - 2}} > k,{\Gamma _{i + 2}} \le k,...\}  \\
  \vdots  \\
 T_{N - i + 1,k} = \{ {\Gamma _i} \le k,{\Gamma _{i - 1}} > k,{\Gamma _{i + 1}} \le k,...,{\Gamma _N} \le k\}  \\
 T_{N - i + 2,k} = \{ {\Gamma _i} \le k,{\Gamma _{i - 1}} \le k,{\Gamma _{i + 1}} > k,{\Gamma _{i - 2}} > k,{\Gamma _{i + 2}} > k,...\}  \\
 T_{N - i + 3,k} = \{ {\Gamma _i} \le k,{\Gamma _{i - 1}} \le k,{\Gamma _{i + 1}} \le k,{\Gamma _{i - 2}} > k,{\Gamma _{i + 2}} > k,...\}  \\
 T_{N - i + 4,k} = \{ {\Gamma _i} \le k,{\Gamma _{i - 1}} \le k,{\Gamma _{i + 1}} \le k,{\Gamma _{i - 2}} > k,{\Gamma _{i + 2}} \le k,...\}  \\
  \vdots  \\
 T_{2(N - i + 1),k} = \{ {\Gamma _i} \le k,{\Gamma _{i - 1}} \le k,{\Gamma _{i + 1}} \le k,...,{\Gamma _N} \le k\}  \\
  \vdots  \\
 T_{(i-1)(N - i + 1)+1,k} = \{ {\Gamma _i} \le k,{\Gamma _{i - 1}} \le k,{\Gamma _{i + 1}} > k,...,{\Gamma _1} \le k\}  \\
  \vdots  \\
 T_{i(N - i + 1),k} = \{ {\Gamma _i} \le k,{\Gamma _{i - 1}} \le k,{\Gamma _{i + 1}} \le k,...,{\Gamma _1} \le k,{\Gamma _N} \le k\} ,
 \end{array}\]
where $T_{1,k} \sim T_{N - i + 1,k}$ denote the events that after
the change pattern first reaching sensor $i$, it propagates across
the sensors sequentially at the right side of sensor $i$. The number
of events equals to the number of sensors at the right side plus 1,
i.e., $N - i + 1$. $T_{N - i + 2,k} \sim T_{2(N - i + 1),k}$ denote
the events that after the change pattern reaching sensor $i$ and
$i-1$, it propagates across the sensors sequentially at the right
side of sensor $i$, and the number of events is also $N - i + 1$,
which is the same for the case that after the change pattern
reaching sensor $i$, $i-1$, $i-2$ and so on. Since there are $i-1$
sensors at the left side of sensor $i$ and the event of no change
pattern reaches any sensor is $T_{0,k}$, the total number of
possible events is $i\times(N-i+1)+1$.

At each time $k$, we assume that the observations ${{\bf{z}}_k} =
[{{{z}}_{1,k}},...,{{{z}}_{N,k}}]$ from all sensors are available at
a control center. For each sensor $i$, conditioned on $\Gamma_i$,
${{z}}_{i,k}$ is Identically and Independently Distributed (IID)
according to $f_0$ before $\Gamma_i$, and IID according to $f_1$
after $\Gamma_i$, i.e.,
\begin{equation}\label{f0f1}
    z_{i,k} \sim \begin{cases}&f_0,~~~~  k<\Gamma_i,\\
&f_1,~~~~ k\geq\Gamma_i.
\end{cases}
\end{equation}

The observation sequence $\{\mathbf{z}_k;k=1,2,...\}$ generates a
filtration $\{\mathcal{F}_k;k=1,2,...\}$ with
\begin{equation}
    {{\cal F}_k} = \sigma({{\bf{z}}_1},...,{{\bf{z}}_k},\{\Gamma=0\}),k = 1,2,...
\end{equation}
where $\sigma ({{\bf{z}}_1},...,{{\bf{z}}_k},\{\Gamma  = 0\} )$
denotes the smallest $\sigma$-field in which
$({{\bf{z}}_1},...,{{\bf{z}}_k},\{\Gamma  = 0\} )$ is measurable and
$\Gamma=\min\{\Gamma_1,...,\Gamma_N\}$.

We use $P$ to denote the probability measure that specifies the
prior distribution of $\{S=i\}$, the distribution of change point
time, and the distribution of $\{\mathbf{z}_k;k=1,2,...\}$. We also
use $\mathbb{E}$ to denote the expectation under the probability
measure $P$. Specifically, we use $P_i$ to denote the probability
measure when $S=i$.

With above setups, the control center needs to detect the earliest
change point time $\Gamma=\min\{\Gamma_1,...,\Gamma_N\}$ as soon as
it occurs. A stopping time $\tau$ will be decided for when to stop
sampling and alarm that a change has occurred, where a false alarm
may happen if $\tau < \Gamma$. We target to minimize the averaged
detection delay $\mathbb{E}\{(\tau-\Gamma)^{+}\}$ with keeping the
false alarm probability $P[\tau < \Gamma]$ small. In addition, we
also require the control center to identify which sensor the change
pattern reaches first. We adopt $\delta_k$ to denote the
$\mathcal{F}_k$-measurable terminal decision rule used by the
control center to make the identification, and $\hat{S}$ to denote
the index of the sensor identified, i.e., $
\hat{S}=\delta_k(\mathcal{F}_k)$. A false identification occurs if
$\hat{S}\neq S$, such that we also want to keep $P[\hat{S}\neq S]$
small. We use $\boldsymbol{\delta}=\{\delta_1,\delta_2,...\}$ to
denote the sequence of terminal decision rules. Summarizing above,
our goal is to design a stopping time $\tau$ and a terminal decision
rule $\boldsymbol{\delta}$ that minimize the aggregated risk
function defined as
\begin{equation}\label{optimizationpro}
R\triangleq
P[\tau<\Gamma]+c_1\mathbb{E}\{(\tau-\Gamma)^{+}\}+c_2P[\hat{B}_1\neq
B_1],
\end{equation}
where $c_1$ and $c_2$ are appropriate constants that balance the
three costs.
\begin{figure}
\begin{center}
\includegraphics[width=0.5\textwidth]{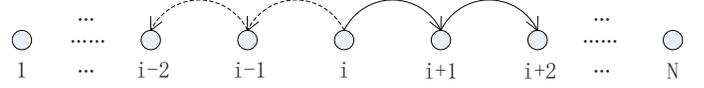}
\caption{A sensor array with $S=i$} \label{array}
\end{center}
\end{figure}

\section{Optimal Rules}\label{sec:solution}

In this section, we present the optimal stopping and terminal
decision rules. To proceed, we define the following posterior
probabilities at time $k$:
\begin{align}
&{\pi _{j,k\left| i \right.}} = P[T_{j,k}\left| {{{\cal F}_k},S = i}
\right.],~i =
1,...,N,~j = 0,...,i(N - i + 1), \label{def_pi}\\
 &p_k^i = P[{S} = i\left| {{{\cal F}_k}} \right.],~i = 1,...,N.\label{def_p}
 \end{align}
We also define the following matrices and vectors:
\begin{align}
 &{{\boldsymbol{\pi }}_k} = {[{\pi _{j,k\left| i \right.}}]_{M \times N}}, \\
 &{\bf{p}}_k = [p_k^1,p_k^2,...,p_k^N]_{1\times N},
 \end{align}
where $M=(N - \lfloor\frac{N+1}{2}\rfloor + 1) \times
\lfloor\frac{N+1}{2}\rfloor + 1$ denotes the maximum number of
events defined in Section II for $i\in\{1,...,N\}$. Corresponding to
$M$, $i=\lfloor\frac{N+1}{2}\rfloor$. For each $i$ with the number
of events less than $M$, the extra elements in ${\boldsymbol{\pi
}}_k$ are set as 0.

With the posterior probabilities defined above, we denote
$\mathbf{q}_k=\{{\boldsymbol{\pi }}_k,{\bf{p}}_k\}$.

We first have the following theorem regarding the optimal terminal
decision rule $\boldsymbol{\delta}$.
\begin{thm}
For any stopping time $\tau$, the optimal terminal decision rule is
\begin{equation}\label{decisonrule}
\hat{S}=\arg\max\limits_{1 \leq i \leq N}
\left\{p_\tau^{1},...,p_\tau^{i},...,p_\tau^{N}\right\},
\end{equation}
and we have
\begin{eqnarray}\label{decisionrule1}
\inf \limits_{\boldsymbol{\delta}}P[\hat{S}\neq
S]=\mathbb{E}\left\{1-\max\left\{p_\tau^{1},...,p_\tau^{N}\right\}\right\}.
\end{eqnarray}
\end{thm}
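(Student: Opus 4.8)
The plan is to convert the optimization over $\boldsymbol{\delta}$ into a pathwise maximum-a-posteriori test, given the information accumulated up to the stopping time. First I would fix an arbitrary stopping time $\tau$ and an arbitrary admissible sequence $\boldsymbol{\delta}=\{\delta_1,\delta_2,\ldots\}$, and write $\hat S=\delta_\tau(\mathcal F_\tau)$ for the realized decision; on $\{\tau=k\}$ this is $\delta_k(\mathcal F_k)$, so $\hat S$ is $\mathcal F_\tau$-measurable, where $\mathcal F_\tau$ is the stopping-time $\sigma$-field of $\tau$. By the tower property,
\begin{equation*}
P[\hat S\neq S]=\mathbb{E}\big[\,P[\hat S\neq S\mid\mathcal F_\tau]\,\big]=\mathbb{E}\big[\,1-P[\hat S=S\mid\mathcal F_\tau]\,\big].
\end{equation*}
Since $\mathbf 1\{S=i\}$ is bounded, the associated martingale $p_k^i=P[S=i\mid\mathcal F_k]$ from~\eqref{def_p} satisfies $p_\tau^i=P[S=i\mid\mathcal F_\tau]$ by optional sampling, and because $\hat S$ takes values in $\{1,\ldots,N\}$ and $\{\hat S=i\}\in\mathcal F_\tau$,
\begin{equation*}
P[\hat S=S\mid\mathcal F_\tau]=\sum_{i=1}^{N}\mathbf 1\{\hat S=i\}\,P[S=i\mid\mathcal F_\tau]=\sum_{i=1}^{N}\mathbf 1\{\hat S=i\}\,p_\tau^i=p_\tau^{\hat S}.
\end{equation*}
Hence $P[\hat S\neq S]=\mathbb{E}[\,1-p_\tau^{\hat S}\,]$ for every admissible $\boldsymbol{\delta}$.

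The second step is the pathwise maximization. To minimize $\mathbb{E}[1-p_\tau^{\hat S}]$ it suffices to maximize $p_\tau^{\hat S}$ on every sample path over the finite set of choices $\hat S\in\{1,\ldots,N\}$, i.e., to take $\hat S=\arg\max_{1\le i\le N}p_\tau^i$ as in~\eqref{decisonrule}, breaking ties by a fixed rule (say the smallest index) so the result remains $\mathcal F_\tau$-measurable — legitimate since each $p_\tau^i$ is $\mathcal F_\tau$-measurable. For this choice $p_\tau^{\hat S}=\max\{p_\tau^1,\ldots,p_\tau^N\}$ surely, while any other $\mathcal F_\tau$-measurable $\hat S$ gives $p_\tau^{\hat S}\le\max\{p_\tau^1,\ldots,p_\tau^N\}$ surely; taking expectations shows at once that the argmax rule attains the minimum and that
\begin{equation*}
\inf_{\boldsymbol{\delta}}P[\hat S\neq S]=\mathbb{E}\big[\,1-\max\{p_\tau^1,\ldots,p_\tau^N\}\,\big],
\end{equation*}
which is~\eqref{decisionrule1}. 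Since $\tau$ enters only through $\mathcal F_\tau$, this also confirms that the terminal rule can be designed separately from the stopping rule, as is later exploited.

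I do not anticipate a real obstacle: this is the textbook MAP argument. The only points that need care are (i) the measurability bookkeeping — that $\delta_\tau(\mathcal F_\tau)$ is genuinely $\mathcal F_\tau$-measurable and that the argmax rule with a deterministic tie-break is itself an admissible terminal rule, so the pathwise optimum is actually achieved inside the allowed class; and (ii) the interchange of $\inf_{\boldsymbol{\delta}}$ with $\mathbb{E}[\cdot]$, which is immediate once one notes that the objective is separable over sample paths and the minimizer is measurable, so no measurable-selection subtlety beyond the tie-break remains. (On $\{\tau=\infty\}$ the same identities hold with $\mathcal F_\infty$ and the corresponding limits, though the risk~\eqref{optimizationpro} makes that case moot.)
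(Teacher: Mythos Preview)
Your proposal is correct and is precisely the standard MAP argument the paper appeals to: the paper does not give its own proof but simply cites Proposition~4.1 of \cite{Poor:quickest}, whose content is exactly the conditioning-plus-pathwise-maximization you wrote out. So your approach coincides with the paper's (referenced) proof, only with the measurability and tie-breaking details made explicit.
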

The proof follows from Proposition 4.1 of \cite{Poor:quickest}.
Theorem 1 implies that the optimal terminal decision rule is simply
to choose the sensor that has the largest posterior probability. A
similar situation also arises in the multiple hypothesis testing
problem considered in \cite{vvv:multihypothesis}.

Using above optimal terminal decision rule, we can further express
the optimization objective in \eqref{optimizationpro} as a function
of the posterior probabilities defined in (\ref{def_pi}) and
(\ref{def_p}), as shown below.
\begin{lemma}~\label{lem:cost} For any stopping time $\tau$, \eqref{optimizationpro} can be written as
\begin{equation}\label{formulation}
\begin{split}
R=\mathbb{E} & \left\{\sum\limits_{i = 1}^N {{{\pi _{0,\tau \left| i
\right.}}}} p_\tau ^i+c_2\left(1-\max\left\{p_\tau^{1},...,p_\tau^{N}\right\} \right)\right. \\
&~~+\left.{c_1}\sum\limits_{k = 0}^{\tau - 1} {\left(1 -
\sum\limits_{i = 1}^N {{\pi _{0,k\left| i \right.}}}
p_k^i\right)}\right\}.
\end{split}
\end{equation}
\end{lemma}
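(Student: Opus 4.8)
The plan is to rewrite each of the three terms in the risk function $R$ defined in \eqref{optimizationpro} as an expectation of an $\mathcal{F}_\tau$-measurable quantity built from the posterior probabilities $\pi_{j,k|i}$ and $p_k^i$, then combine them. The main tool throughout is the tower property of conditional expectation together with the observation that the event $\{\Gamma > k\}$, given $S=i$, is exactly the event $T_{0,k}$ appearing in the list in Section~\ref{sec:model}.

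First I would handle the false-alarm term $P[\tau < \Gamma]$. Since $\{\tau<\Gamma\}=\{\Gamma>\tau\}$ and $\tau$ is a stopping time, I would write $P[\tau<\Gamma]=\mathbb{E}\{\mathbf{1}_{\{\Gamma>\tau\}}\}=\mathbb{E}\{\mathbb{E}[\mathbf{1}_{\{\Gamma>\tau\}}\mid\mathcal{F}_\tau]\}=\mathbb{E}\{P[\Gamma>\tau\mid\mathcal{F}_\tau]\}$. Conditioning further on $S$ and using the law of total probability, $P[\Gamma>\tau\mid\mathcal{F}_\tau]=\sum_{i=1}^N P[\Gamma>\tau\mid\mathcal{F}_\tau,S=i]\,P[S=i\mid\mathcal{F}_\tau]=\sum_{i=1}^N \pi_{0,\tau|i}\,p_\tau^i$, using $\{\Gamma>\tau\}=T_{0,\tau}$ conditional on $S=i$ and the definitions \eqref{def_pi}, \eqref{def_p}. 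This yields the first term $\mathbb{E}\{\sum_i \pi_{0,\tau|i}\,p_\tau^i\}$.

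Next I would treat the detection-delay term. The standard identity $(\tau-\Gamma)^+=\sum_{k=0}^\infty \mathbf{1}_{\{\Gamma\le k<\tau\}}=\sum_{k=0}^{\tau-1}\mathbf{1}_{\{\Gamma\le k\}}$ lets me write $\mathbb{E}\{(\tau-\Gamma)^+\}=\mathbb{E}\{\sum_{k=0}^{\tau-1}\mathbf{1}_{\{\Gamma\le k\}}\}$. Because $\{\tau>k\}=\{\tau\le k\}^c\in\mathcal{F}_k$, I can interchange expectation and sum (monotone convergence, all terms nonnegative) and condition each summand on $\mathcal{F}_k$: $\mathbb{E}\{\mathbf{1}_{\{\tau>k\}}\mathbf{1}_{\{\Gamma\le k\}}\}=\mathbb{E}\{\mathbf{1}_{\{\tau>k\}}P[\Gamma\le k\mid\mathcal{F}_k]\}=\mathbb{E}\{\mathbf{1}_{\{\tau>k\}}(1-\sum_i\pi_{0,k|i}p_k^i)\}$, again by decomposing over $S$ as above. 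Summing over $k$ and folding the indicator back into the upper limit $\tau-1$ gives $\mathbb{E}\{\sum_{k=0}^{\tau-1}(1-\sum_i\pi_{0,k|i}p_k^i)\}$, which is the $c_1$ term. Finally, the false-identification term: by Theorem~1, $\inf_{\boldsymbol\delta}P[\hat S\neq S]=\mathbb{E}\{1-\max_i p_\tau^i\}$, which is exactly the $c_2$ term; here I would note that minimizing $R$ over $(\tau,\boldsymbol\delta)$ decouples into first choosing $\boldsymbol\delta$ optimally for the given $\tau$ (Theorem~1) and the remaining terms not depending on $\boldsymbol\delta$, so the stated expression for $R$ is the one to be minimized over $\tau$. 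Adding the three pieces gives \eqref{formulation}.

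The one genuinely delicate point, and the step I would be most careful about, is the interchange of the infinite sum with the expectation in the delay term and the measurability bookkeeping when converting between $\sum_{k=0}^{\tau-1}(\cdots)$ and $\sum_{k=0}^\infty \mathbf{1}_{\{\tau>k\}}(\cdots)$: one must check that $\{\tau > k\}\in\mathcal{F}_k$ so that $\mathbf{1}_{\{\tau>k\}}$ can be pulled outside the inner conditional expectation, and invoke Tonelli/monotone convergence since every summand is nonnegative (so the identity holds even when $\mathbb{E}\{(\tau-\Gamma)^+\}=\infty$, in which case $R=\infty$ and there is nothing to prove). The conditional-probability computations themselves are routine once the identification $\{\Gamma>k\}\cap\{S=i\}=T_{0,k}\cap\{S=i\}$ is made explicit.
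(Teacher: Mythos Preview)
Your proposal is correct and follows essentially the same approach as the paper: identify $P[\Gamma>k\mid\mathcal{F}_k]=\sum_i\pi_{0,k|i}p_k^i$ via conditioning on $S$, rewrite the delay term through $(\tau-\Gamma)^+=\sum_{k=0}^{\tau-1}\mathbf{1}_{\{\Gamma\le k\}}$ and tower on $\mathcal{F}_k$, and invoke Theorem~1 for the identification term. The only difference is cosmetic: the paper packages the false-alarm and delay computations into a citation of Proposition~5.1 of \cite{Poor:quickest}, whereas you unpack that proposition explicitly (including the measurability and monotone-convergence bookkeeping), making your argument more self-contained but not materially different.
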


\begin{proof}
Based on the Bayesian's rule, we have
\begin{align}\label{eq1_lm1}
P[\Gamma  > k\left| {{{\cal F}_k}} \right.] = P[T_{0,k}\left|
{{{\cal F}_k}} \right.] &=
\sum\limits_{i = 1}^N {P[T_{0,k}\left| {{{\cal F}_k}} \right.,B = i]} p_k^i \nonumber\\
&= \sum\limits_{i = 1}^N {{\pi _{0,k\left| i \right.}}} p_k^i.
\end{align}

Further, according to Proposition 5.1 in \cite{Poor:quickest},
\begin{align}
&P[\tau<\Gamma]+c_1\mathbb{E}\{(\tau-\Gamma)^{+}\}\nonumber\\
&=\mathbb{E}\left\{P[\tau<\Gamma|\mathcal{F}_\tau]+c_1\sum
\limits_{k=0}\limits^{\tau-1}P[\Gamma\leq k|\mathcal{F}_k]\right\}\nonumber\\
&=\mathbb{E} \left\{ \sum\limits_{i = 1}^N {{\pi _{0,\tau \left| i
\right.}}} p_\tau ^i+ {c_1}\sum\limits_{k = 0}^{\tau - 1} \left(1 -
\sum\limits_{i = 1}^N {{\pi _{0,k\left| i \right.}}} p_k^i
\right)\right\}.
\end{align}
By combining (\ref{decisionrule1}), we complete the proof.
\end{proof}

Furthermore, we have the following lemma regarding $\mathbf{q}_k$.
\begin{lemma}
There is a time-invariant function $g$ such that
$\mathbf{q}_k=g(\mathbf{q}_{k-1},\mathbf{z}_k)$.
\end{lemma}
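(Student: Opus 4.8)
The plan is to show that the posterior vector $\mathbf{q}_k=\{\boldsymbol{\pi}_k,\mathbf{p}_k\}$ forms a sufficient statistic whose evolution depends only on its previous value and the newest observation $\mathbf{z}_k$. First I would fix a sensor index $i$ and an event $T_{j,k}$, and write the joint probability $P[T_{j,k},S=i,\mathcal{F}_k]$ using the chain rule: condition on $\mathcal{F}_{k-1}$, the event configuration at time $k-1$, and $\{S=i\}$, then multiply by the one-step transition probability of the event index (which is governed solely by the geometric parameters $\rho,\rho_1,\rho_2$) and by the likelihood of the new observation vector $\mathbf{z}_k$ given the current change status at each sensor (which is $\prod_{n} f_{\mathbf{1}\{\Gamma_n\le k\}}(z_{n,k})$, a known function determined by whether each $\Gamma_n$ has occurred by time $k$, and that indicator is itself determined by the event index $T_{j,k}$). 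The key observation is that, given $S=i$, the sequence of event indices $\{j : T_{j,k}\}_k$ is a Markov chain on the finite set $\{0,1,\dots,i(N-i+1)\}$ with transition probabilities that do not depend on $k$, and conditioned on that index the emission density of $\mathbf{z}_k$ is time-invariant.

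Next I would assemble these pieces via Bayes' rule. Writing $\pi_{j,k|i}$ and $p_k^i$ as normalized versions of the joint probabilities above, the numerator at time $k$ is a sum over parent event indices $j'$ at time $k-1$ of $\pi_{j',k-1|i}$ times the (constant) transition probability $P[T_{j,k}\mid T_{j',k-1},S=i]$ times the emission factor $L_j(\mathbf{z}_k):=\prod_{n}f_{a_n(j)}(z_{n,k})$, where $a_n(j)\in\{0,1\}$ encodes whether sensor $n$ has changed under event $T_{j,k}$; this whole expression is then scaled by $p_{k-1}^i$, and finally everything is divided by the total normalization $\sum_i\sum_j(\text{numerator})$ so that $\sum_{i,j}\pi_{j,k|i}p_k^i=1$ and $p_k^i=\sum_j \pi_{j,k|i}p_k^i\big/\big(\sum_j\pi_{j,k|i}\big)$ appropriately recovers the marginal. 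Crucially, every ingredient in this update — the transition matrix of the event chain, the densities $f_0,f_1$, the bookkeeping map $j\mapsto a_n(j)$ — is fixed and time-independent, so the map $(\mathbf{q}_{k-1},\mathbf{z}_k)\mapsto\mathbf{q}_k$ is one single function $g$ that does not vary with $k$. I would also check the base case $k=1$ against the prior $p_0^i=1/N$ and the degenerate initial event distribution, which is handled by the same formula with $\mathbf{q}_0$ fixed.

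The main obstacle, and the step I would spend the most care on, is the combinatorial bookkeeping of the event transitions: because the change propagates independently on the left and right of sensor $i$ with different geometric parameters, a single event index $T_{j,k}$ actually encodes a pair (left front position, right front position), and the one-step transition splits multiplicatively across the two sides — each side either stays put (with probability $(1-\rho_1)$ or $(1-\rho_2)$, or $(1-\rho)$ for the very first hop onto sensor $i$) or advances by one sensor. Writing this transition kernel cleanly, and verifying that the padding convention (extra entries of $\boldsymbol{\pi}_k$ set to zero when $i(N-i+1)<M$) is preserved by $g$, is the fiddly part. Everything else is a routine application of Bayes' rule together with the conditional independence of $\mathbf{z}_k$ across sensors and across time given the $\Gamma_n$'s, so once the transition structure is written down the time-invariance of $g$ is immediate. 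I would conclude by stating $g$ explicitly in vector form and remarking that $\{\mathbf{q}_k\}$ is therefore a Markov process, which is exactly what Section~\ref{sec:solution} needs to invoke Markov optimal stopping theory.
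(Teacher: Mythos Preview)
Your proposal is correct and follows essentially the same approach as the paper: both arguments hinge on writing the Bayes update for $\pi_{j,k|i}$ and $p_k^i$, using the memoryless property of the geometric propagation model to obtain time-invariant transition probabilities among the events $T_{j,k}$, and noting that the observation likelihood given the current event index is a fixed product of $f_0$ and $f_1$ factors. The only cosmetic difference is that you work with the joint $P[T_{j,k},S=i\mid\mathcal{F}_k]$ and then marginalize, whereas the paper applies Bayes' rule separately to $\pi_{j,k|i}$ (with $i$ fixed) and to $p_k^i$; these are equivalent, and your extra care about the left/right front bookkeeping and the zero-padding convention is a welcome addition rather than a departure.
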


\begin{proof}
We have
\begin{align}\label{pi_jki}
&\pi_{j,k|i}=P[T_{j,k}|{\cal F}_{k-1},{{\bf{z}}_k},S=i]\nonumber\\
&= \frac{{f({{\bf{z}}_k}\left| {{{\cal F}_{k - 1}},T_{j,k}}
\right.,{S} = i)P[T_{j,k}\left| {{{\cal F}_{k - 1}},S=i}
\right.]}}{{\sum\limits_{j = 0}^{i(N -i+ 1)} {f({{\bf{z}}_k}\left|
{{{\cal F}_{k - 1}},T_{j,k}} \right.,{S} = i)P[T_{j,k}\left| {{{\cal
F}_{k - 1}},S=i} \right.]} }},
\end{align}
in which
\begin{align}\label{direcltywrite}
&f({{\bf{z}}_k}\left| {{{\cal F}_{k - 1}},T_{0,k}} \right.,S=i) ={f_0}({z_{i,k}}){f_0}({z_{i - 1,k}}){f_0}({z_{i + 1,k}})......\nonumber \\
&f({{\bf{z}}_k}\left| {{{\cal F}_{k - 1}},T_{1,k}} \right.,S=i) = {f_1}({z_{i,k}}){f_0}({z_{i - 1,k}}){f_0}({z_{i + 1,k}})...... \nonumber\\
&{\rm{ }} \vdots \nonumber \\
&f({{\bf{z}}_k}\left| {{{\cal F}_{k - 1}},T_{i(N-i+1),k}} \right.,S=i) = {f_1}({z_{i,k}}){f_1}({z_{i - 1,k}}){f_1}({z_{i + 1,k}})...... \nonumber\\
\end{align}

\begin{figure}
\begin{center}
\includegraphics[width=0.5\textwidth]{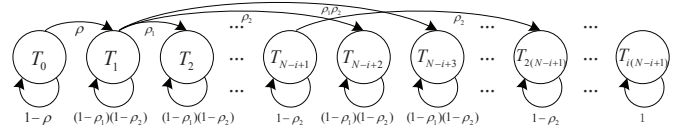}
\caption{Transition probabilities} \label{trans_prob}
\end{center}
\end{figure}

Since the geometric distribution based model owns the memoryless
property, the transition probabilities of the events are shown in
Fig.~\ref{trans_prob}. According to these transition probabilities,
we have
\begin{align}\label{transition}
&P[T_{0,k} | {{{\cal F}_{k - 1}},S = i}]=(1-\rho)\pi_{0,{k-1}|i}, \nonumber\\
&P[T_{1,k}| {{{\cal F}_{k - 1}},S = i }]= (1 - {\rho
_1})(1 - {\rho _2}){\pi_{1,k - 1\left| {i} \right.}} + \rho {\pi _{0,k - 1\left| {i} \right.}}\nonumber\\
&{\rm{ }} \vdots\nonumber\\
&P[T_{N-i+1,k}| {{{\cal F}_{k - 1}},S = i }]= (1 - {\rho _2}){\pi
_{N-i+1,k - 1\left| {i} \right.}} + \rho_1 {\pi _{N-i,k - 1\left| {i} \right.}}\nonumber\\
&P[T_{N-i+2,k}| {{{\cal F}_{k - 1}},S = i }]= (1 - {\rho _1})(1 -
{\rho _2}){\pi_{N-i+2,k - 1\left| {i} \right.}} + \rho_2 {\pi _{1,k - 1\left| {i} \right.}}\nonumber\\
&P[T_{N-i+3,k}| {{{\cal F}_{k - 1}},S = i }]= (1 - {\rho _1})(1 -
{\rho _2}){\pi_{N-i+3,k - 1\left| {i} \right.}} \nonumber\\
&~~~~~~~~~~~~~~~~~~~~~~~~~~~~+ \rho_1 {\pi _{N-i+2,k - 1\left| {i} \right.}}+\rho_2 {\pi _{2,k - 1\left| {i} \right.}}+\rho_1 \rho_2{\pi _{1,k - 1\left| {i} \right.}} \nonumber\\
&{\rm{ }} \vdots
\end{align}
Therefore, $\boldsymbol{\pi}_k$ can be computed by
$\mathbf{q}_{k-1}$ and $\mathbf{z}_k$.

For another element ${\bf{p}}_k$ in $\mathbf{q}_k$, we have
\begin{equation}\label{p_ki}
\begin{split} p_k^i& = P[{S} = i\left| {{{\cal F}_{k - 1}},{{\bf{z}}_k}} \right.] \\
&=\frac{{f({{\bf{z}}_k}\left| {{{\cal F}_{k - 1}},{S} = i}
\right.)P[{S} = i\left| {{{\cal F}_{k - 1}}}
\right.]}}{{\sum\limits_{n = 0}^{M-1} {f({{\bf{z}}_k}\left| {{{\cal
F}_{k - 1}},{S} = n} \right.)P[{S} = n\left| {{{\cal F}_{k - 1}}}
\right.]} }},
\end{split}
\end{equation}
where
\begin{align}\label{f_ki}
 &f({{\bf{z}}_k}\left| {{{\cal F}_{k - 1}},{S} = n} \right.) \nonumber\\
 &= \sum\limits_{j = 0}^{M-1} f({{\bf{z}}_k}\left| {{{\cal F}_{k - 1}},T_{j,k}}
 \right.,S=n)P[T_{j,k}| {{{\cal F}_{k - 1}},S = n }],
\end{align}
which can be calculated by using (\ref{direcltywrite}) and
(\ref{transition}). Hence, ${\bf{p}}_k$ can also be computed by
$\mathbf{q}_{k-1}$ and $\mathbf{z}_k$.
\end{proof}
This lemma implies that the posterior probabilities $\mathbf{q}_k$
can be recursively computed from $\mathbf{q}_{k-1}$ and
$\mathbf{z}_k$. Combined with Lemma~\ref{lem:cost}, we know that
$\mathbf{q}_k$ is a sufficient statistic for the problem of
minimizing (\ref{formulation}). Thus, the problem at the hand is a
Markov stopping time problem.

Therefore, we could borrow results from the optimal stopping time
theory to design the optimal decision rules for our problem. We
first consider a finite time horizon case, in which one has to make
a decision before a deadline $T$, i.e., $\tau\leq T$. It is easy to
check that the cost-to-go functions are
\begin{align}\label{cost-to-go}
J_T^{T}(\mathbf{q}_T)=\sum\limits_{i = 1}^N& {{\pi _{0,T\left| i
\right.}}} p_T^i+c_2\left(
1-\max\left\{p_T^{1},...,p_T^{N}\right\}\right), \texttt{and} \nonumber\\
J_k^{T}(\mathbf{q}_k)=\min&\left\{{\sum\limits_{i = 1}^N {{\pi
_{0,k\left| i \right.}}} p_k^i}+c_2\left(
1-\max\left\{p_k^{1},...,p_k^{N}\right\}\right), \right. \nonumber\\
&~~\left. c_1 \left(1 - \sum\limits_{i = 1}^N {{\pi _{0,k\left| i
\right.}}} p_k^i\right)+A_k^{T}(\mathbf{q}_k) \right\},
\end{align}
where
\begin{eqnarray}
A_k^{T}(\mathbf{q}_k)=\mathbb{E}\left\{J_{k+1}^T(\mathbf{q}_{k+1})|\mathcal{F}_k\right\}=\int
J_{k+1}^T(g(\mathbf{q}_k,\mathbf{z}))f(\mathbf{z}|\mathcal{F}_k)d\mathbf{z}.\nonumber
\end{eqnarray}

Applying the optimal stopping time theory \cite{Poor:quickest}, we
have the following theorem for the optimal decision rules.
\begin{thm}\label{thm2}
The optimal stopping time is obtained as
\begin{align}\label{stoprule}
\tau_{opt}=\inf&\left\{k: {\sum\limits_{i = 1}^N {{\pi _{0,k\left| i
\right.}}}
p_k^i}+c_2\left( 1-\max\left\{p_k^{1},...,p_k^{N}\right\}\right)\right.\nonumber\\
&~~~~~~~~\left.\leq  c_1 \left(1 - \sum\limits_{i = 1}^N {{\pi
_{0,k\left| i \right.}}} p_k^i\right)+A_k^{T}(\mathbf{q}_k)\right\},
\end{align}
with the optimal terminal decision rule is given in
(\ref{decisonrule}).
\end{thm}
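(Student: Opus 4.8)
The plan is to prove Theorem~\ref{thm2} by backward induction on the finite-horizon dynamic program, exploiting the two structural facts already established: by Lemma~\ref{lem:cost} the risk in \eqref{formulation} (restricted to $\tau\leq T$) is an additive functional, namely a running cost $c_1\bigl(1-\sum_{i=1}^N\pi_{0,k|i}p_k^i\bigr)$ incurred at each time $k<\tau$ plus a terminal cost $h(\mathbf{q}_\tau)\triangleq\sum_{i=1}^N\pi_{0,\tau|i}p_\tau^i+c_2\bigl(1-\max_i p_\tau^i\bigr)$ paid at $\tau$; and, by Lemma~2 together with the observation model \eqref{f0f1}, the state $\mathbf{q}_k=g(\mathbf{q}_{k-1},\mathbf{z}_k)$ is Markov and the conditional law of $\mathbf{z}_{k+1}$ given $\mathcal{F}_k$ depends on $\mathcal{F}_k$ only through $\mathbf{q}_k$. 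Hence $\inf_{\tau\leq T}R$ is exactly a standard finite-horizon Markov optimal stopping problem with state $\mathbf{q}_k$, running cost as above, and terminal cost $h$.

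First I would invoke the dynamic programming principle for such problems (Chapter on optimal stopping in \cite{Poor:quickest}). Define $J_k^T$ as in \eqref{cost-to-go} and prove, by downward induction from $k=T$ to $k=0$, that $J_k^T(\mathbf{q}_k)$ is the minimal conditional expected cost-to-go given $\mathcal{F}_k$ over all stopping times taking values in $\{k,k+1,\dots,T\}$, and that the minimum is attained by the rule: stop at $k$ if $h(\mathbf{q}_k)\leq c_1\bigl(1-\sum_{i=1}^N\pi_{0,k|i}p_k^i\bigr)+A_k^T(\mathbf{q}_k)$, otherwise continue. The base case $k=T$ is immediate since one is forced to stop, so $J_T^T=h(\mathbf{q}_T)$. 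For the inductive step, the best cost-to-go at time $k$ is the smaller of stopping now, which costs $h(\mathbf{q}_k)$, and continuing optimally, which costs the current running cost plus $\mathbb{E}[J_{k+1}^T(\mathbf{q}_{k+1})\mid\mathcal{F}_k]=A_k^T(\mathbf{q}_k)$ by the Markov property; this is precisely the recursion \eqref{cost-to-go}. Taking $k=0$, the rule achieves $J_0^T(\mathbf{q}_0)=\inf_{\tau\leq T}R$, and the first epoch at which the "stop" branch is selected is exactly $\tau_{opt}$ of \eqref{stoprule}; thus $\tau_{opt}$ is optimal. Optimality of the terminal decision \eqref{decisonrule} is then inherited directly from Theorem~1, which treats $\tau$ as arbitrary and fixed.

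I expect the main obstacle to be essentially bookkeeping rather than depth: making the dynamic programming step rigorous, i.e.\ checking that the conditional expectation defining $A_k^T$ is well defined and that the pointwise "stop/continue" minimization is a legitimate measurable strategy, so that the stopping set $\{\mathbf{q}:h(\mathbf{q})\leq c_1(1-\sum_{i=1}^N\pi_{0,k|i}p_k^i)+A_k^T(\mathbf{q})\}$ is measurable and $\tau_{opt}$ is a genuine stopping time. This rests on the measurability (indeed continuity) of $g$, $h$, and $\mathbf{z}\mapsto f(\mathbf{z}\mid\mathcal{F}_k)$ in $\mathbf{q}_k$, which is standard in the framework of \cite{Poor:quickest}. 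A secondary point is integrability: since $h$ takes values in $[0,1+c_2]$ and the running cost in $[0,c_1]$, all the $J_k^T$ are bounded and no expectation blows up, so the recursion is legitimate. One may finally remark that letting $T\to\infty$, with $J_k^T$ nonincreasing in $T$ and uniformly bounded, yields the corresponding infinite-horizon rule with $A_k^T$ replaced by its limit, but the statement as given is the finite-horizon case and follows from the induction above.
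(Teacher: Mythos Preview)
Your proposal is correct and takes essentially the same approach as the paper: the paper simply states the cost-to-go recursion \eqref{cost-to-go} and then invokes the optimal stopping time theory of \cite{Poor:quickest} without further argument, while you spell out the backward-induction dynamic programming that underlies that citation. The structural ingredients you rely on (Lemma~\ref{lem:cost} for the additive cost and Lemma~2 for the Markov property of $\mathbf{q}_k$) are exactly those the paper sets up before stating the theorem, so there is no substantive difference in route.
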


In the infinite time horizon case when $T\rightarrow\infty$, we have
$J_k^\infty(\mathbf{q})$ defined as
\begin{equation}
J_k^\infty(\mathbf{q})=\lim_{T \rightarrow \infty}
J_k^T(\mathbf{q}),
\end{equation}
since we have $J_k^T(\mathbf{q})>0$, $J_k^{T+1}(\mathbf{q})\leq
J_k^T(\mathbf{q})$, and the fact that all strategies allowed with
deadline $T$ are also allowed with deadline $T+1$. Since the
observations are memoryless and conditionally IID,
$J_k^\infty(\mathbf{q})$ is the same for all $k$; we then use
$J(\mathbf{q})$ to denote $J_k^\infty(\mathbf{q})$. Thus,
$A(\mathbf{q})$ is derived as
\begin{equation}
    \begin{split}
    A(\mathbf{q})=\lim_{T\rightarrow\infty}A_k^T(\mathbf{q})&=\int \lim_{T\rightarrow\infty}
    J_{k+1}^T(g(\mathbf{q},\mathbf{z}))f(\mathbf{z}|\mathbf{q})d\mathbf{z}\\
     &=\int J(g(\mathbf{q},\mathbf{z}))f(\mathbf{z}|\mathbf{q})d\mathbf{z},
    \end{split}
\end{equation}
in which the interchange of $\lim$ and $\int$ is allowed due to the
dominated convergence theorem.

Therefore, when the deadline is infinite, the optimal stopping rule
becomes
\begin{align}\label{stoprule_infi}
\tau_{opt}=\inf&\left\{k: {\sum\limits_{i = 1}^N {{\pi _{0,k\left| i
\right.}}}
p_k^i}+c_2\left( 1-\max\left\{p_k^{1},...,p_k^{N}\right\}\right)\right.\nonumber\\
&~~~~~~~~\left.\leq  c_1 \left(1 - \sum\limits_{i = 1}^N {{\pi
_{0,k\left| i \right.}}} p_k^i\right)+A(\mathbf{q}_k)\right\},
\end{align}
with the optimal terminal decision rule is given in
(\ref{decisonrule}).
\section{Approximation to The Optimal Stopping Rule}\label{sec:approx}
When $N$ is large, the optimal stopping rule does not have a simple
structure, which makes the implementation highly costly. In this
section, we propose a much simpler rule which approximates to the
optimal stopping rule.

\begin{lemma}\label{lemma3}
The sequence $\left\{\min\limits_{1\leq i \leq
N}\{1-p_{k}^i\},\mathcal{F}_k;k\geq0\right\}$ is a supermartingale,
i.e.,
\begin{equation}\label{p2}
\mathbb{E}\left\{\min\limits_{1\leq i \leq
N}\{1-p_k^{i}\}|\mathcal{F}_{k-1}\right\}\leq \min\limits_{1\leq i
\leq N}\{1-p_{k-1}^i\},
\end{equation}
\end{lemma}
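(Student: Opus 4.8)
The plan is to reduce the statement to the elementary fact that a pointwise minimum of martingales is a supermartingale. First observe that $\min_{1\le i\le N}\{1-p_k^i\}=1-\max_{1\le i\le N}p_k^i$, so it suffices to show that for each fixed $i$ the process $\{1-p_k^i,\mathcal{F}_k\}$ is a martingale and then that the minimum over $i$ is a supermartingale. Every quantity here lies in $[0,1]$, so integrability is immediate, and $p_k^i$ is $\mathcal{F}_k$-measurable by its definition in \eqref{def_p}; hence only the conditional-expectation inequality requires any work.

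For the martingale property of $p_k^i$ I would write $p_k^i=P[S=i\,|\,\mathcal{F}_k]=\mathbb{E}[\mathbf{1}\{S=i\}\,|\,\mathcal{F}_k]$ and apply the tower property, using $\mathcal{F}_{k-1}\subseteq\mathcal{F}_k$:
\begin{equation*}
\mathbb{E}[p_k^i\,|\,\mathcal{F}_{k-1}]=\mathbb{E}\big[\mathbb{E}[\mathbf{1}\{S=i\}\,|\,\mathcal{F}_k]\,\big|\,\mathcal{F}_{k-1}\big]=\mathbb{E}[\mathbf{1}\{S=i\}\,|\,\mathcal{F}_{k-1}]=p_{k-1}^i .
\end{equation*}
Consequently $1-p_k^i$ is also an $\mathcal{F}_k$-martingale for every $i$.

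The remaining step passes from the individual martingales to their minimum. Fix an arbitrary index $j\in\{1,\dots,N\}$. Since $\min_{1\le i\le N}\{1-p_k^i\}\le 1-p_k^j$ holds pointwise, monotonicity of conditional expectation yields
\begin{equation*}
\mathbb{E}\Big[\min_{1\le i\le N}\{1-p_k^i\}\,\Big|\,\mathcal{F}_{k-1}\Big]\le \mathbb{E}[1-p_k^j\,|\,\mathcal{F}_{k-1}]=1-p_{k-1}^j .
\end{equation*}
As $j$ was arbitrary, taking the minimum over $j$ on the right-hand side gives \eqref{p2}. I do not expect a genuine obstacle; the one point deserving care is that the bound must be established for all indices $j$ before minimizing, rather than substituting the random minimizing index directly into the conditional expectation.
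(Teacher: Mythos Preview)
Your argument is correct. The key observations---that $p_k^i=\mathbb{E}[\mathbf{1}\{S=i\}\mid\mathcal{F}_k]$ is an $\mathcal{F}_k$-martingale by the tower property, and that the pointwise minimum of finitely many martingales is a supermartingale via the inequality $\min_i(1-p_k^i)\le 1-p_k^j$ followed by minimizing over $j$---are all valid, and your caveat about not substituting the random minimizer directly is exactly the right thing to flag.

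The paper itself does not spell out a proof; it simply cites Zacks' \emph{The Theory of Statistical Inference} and mentions Fatou's lemma. That reference treats the Bayes posterior risk in sequential multihypothesis testing and invokes (conditional) Fatou to handle the infimum, which is the natural tool when the index set may be infinite. Your route is more elementary and entirely self-contained: with only $N$ hypotheses, the finite minimum needs no limiting argument, and monotonicity of conditional expectation plus the tower property suffice. What the cited approach buys is generality (it would cover countably many hypotheses without change); what your approach buys is transparency and no external dependence.
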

The proof follows from page 477 of \cite{Zacks:statistic}, by using
Fatou's lemma.

We can use Lemma~\ref{lemma3} to derive the following approximation
of the optimal stopping rule.

\begin{thm}
In the asymptotic case of the rare change occurring with
$\rho\rightarrow0$, one approximation of the optimal stopping rule
has the following simple structure
\begin{equation}\label{appro_stoppingrule}
    \tau_{app}=\inf\left\{k:\sum\limits_{j = 1}^{M-1} V_{k,j}\geq \frac{1}{c_1} \right\}
\end{equation}
where $V_{k,j}=\frac{\sum\limits_{i = 1}^N{{\pi _{j,k\left| i
\right.}}} p_{k}^i}{\rho\sum\limits_{i = 1}^N{{\pi _{0,k\left| i
\right.}}} p_{k}^i},~j=0,1,...,M-1$. And we use the optimal terminal
decision rule specified in (\ref{decisonrule}).
\end{thm}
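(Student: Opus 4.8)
The plan is to obtain $\tau_{app}$ as the one-step-look-ahead (myopic) relaxation of the optimal rule in Theorem~\ref{thm2}, and to use the hypothesis $\rho\to0$ to argue that the terms discarded in that relaxation are of lower order, so that the relaxation is (asymptotically) tight. Write the stopping and continuation costs appearing in \eqref{stoprule_infi} as
\[
 h(\mathbf{q}_k)=\sum_{i=1}^{N}\pi_{0,k|i}\,p_k^{i}+c_2\Big(1-\max_{1\le i\le N}p_k^{i}\Big),\qquad
 C(\mathbf{q}_k)=c_1\Big(1-\sum_{i=1}^{N}\pi_{0,k|i}\,p_k^{i}\Big)+A(\mathbf{q}_k),
\]
so that $\tau_{opt}=\inf\{k:h(\mathbf{q}_k)\le C(\mathbf{q}_k)\}$ and $J(\mathbf{q}_k)=\min\{h(\mathbf{q}_k),C(\mathbf{q}_k)\}\le h(\mathbf{q}_k)$. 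Since $A(\mathbf{q}_k)=\mathbb{E}[J(\mathbf{q}_{k+1})\,|\,\mathcal{F}_k]\le\mathbb{E}[h(\mathbf{q}_{k+1})\,|\,\mathcal{F}_k]$, I would replace $A(\mathbf{q}_k)$ by this one-step surrogate, producing the rule $\inf\{k:h(\mathbf{q}_k)\le c_1(1-\sum_i\pi_{0,k|i}p_k^{i})+\mathbb{E}[h(\mathbf{q}_{k+1})\,|\,\mathcal{F}_k]\}$; the work then splits into simplifying the defining inequality of this rule and controlling the error $\mathbb{E}[h(\mathbf{q}_{k+1})-J(\mathbf{q}_{k+1})\,|\,\mathcal{F}_k]=\mathbb{E}[(h(\mathbf{q}_{k+1})-C(\mathbf{q}_{k+1}))^{+}\,|\,\mathcal{F}_k]$.

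Two elementary computations drive the simplification. First, for each fixed $i$ the events $\{T_{j,k}\}_{j}$ form a partition, so $\sum_{j}\pi_{j,k|i}=1$ and hence
\[
 \sum_{j=0}^{M-1}\sum_{i=1}^{N}\pi_{j,k|i}\,p_k^{i}=\sum_{i=1}^{N}p_k^{i}=1,
 \qquad\text{i.e.}\qquad
 1-\sum_{i=1}^{N}\pi_{0,k|i}\,p_k^{i}=\sum_{j=1}^{M-1}\sum_{i=1}^{N}\pi_{j,k|i}\,p_k^{i}.
\]
Second, by the tower property together with the memoryless geometric transition in the first line of \eqref{transition},
\[
 \mathbb{E}\Big[\sum_{i=1}^{N}\pi_{0,k+1|i}\,p_{k+1}^{i}\,\Big|\,\mathcal{F}_k\Big]
 =P[\Gamma>k+1\,|\,\mathcal{F}_k]
 =\sum_{i=1}^{N}(1-\rho)\,\pi_{0,k|i}\,p_k^{i}
 =(1-\rho)\sum_{i=1}^{N}\pi_{0,k|i}\,p_k^{i},
\]
where the first equality is \eqref{eq1_lm1} applied at time $k+1$ combined with $\mathbb{E}[P[\Gamma>k+1|\mathcal{F}_{k+1}]|\mathcal{F}_k]=P[\Gamma>k+1|\mathcal{F}_k]$. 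For the identification term I would invoke Lemma~\ref{lemma3}, which gives $\mathbb{E}[\,1-\max_i p_{k+1}^{i}\,|\,\mathcal{F}_k]\le 1-\max_i p_k^{i}$; hence the $c_2$ contribution to $\mathbb{E}[h(\mathbf{q}_{k+1})|\mathcal{F}_k]$ is at most the $c_2$ contribution to $h(\mathbf{q}_k)$, the two partially cancel, and there remains a nonnegative term $c_2 r_k$ with $r_k=(1-\max_i p_k^{i})-\mathbb{E}[1-\max_i p_{k+1}^{i}|\mathcal{F}_k]\ge0$.

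Assembling the pieces, the myopic inequality $h(\mathbf{q}_k)\le c_1(1-\sum_i\pi_{0,k|i}p_k^{i})+\mathbb{E}[h(\mathbf{q}_{k+1})|\mathcal{F}_k]$ becomes, after cancelling the identification terms and substituting the second display,
\[
 \sum_{i=1}^{N}\pi_{0,k|i}\,p_k^{i}+c_2\,r_k\le c_1\Big(1-\sum_{i=1}^{N}\pi_{0,k|i}\,p_k^{i}\Big)+(1-\rho)\sum_{i=1}^{N}\pi_{0,k|i}\,p_k^{i}.
\]
Dropping $c_2 r_k$ and rearranging yields $\rho\sum_i\pi_{0,k|i}p_k^{i}\le c_1(1-\sum_i\pi_{0,k|i}p_k^{i})$, which by the partition identity is exactly
\[
 \sum_{j=1}^{M-1}V_{k,j}=\frac{1-\sum_i\pi_{0,k|i}p_k^{i}}{\rho\sum_i\pi_{0,k|i}p_k^{i}}\ \ge\ \frac{1}{c_1}.
\]
Taking the first such $k$ gives $\tau_{app}$, and the terminal rule is that of Theorem~1.

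It remains to justify the two quantities that were dropped. For $c_2 r_k$: decomposing $p_k^{i}=\frac1N P[\Gamma>k|\mathcal{F}_k]+P[S=i,\Gamma\le k|\mathcal{F}_k]$ (the observations being uninformative about $S$ before $\Gamma$) gives $|p_k^{i}-\tfrac1N|\le 1-\sum_i\pi_{0,k|i}p_k^{i}$, which on the myopic boundary equals $\rho/(c_1+\rho)=O(\rho)$; the same bound at time $k+1$ then yields $r_k=O(\rho)$. The main obstacle is the continuation-regret term $\mathbb{E}[(h(\mathbf{q}_{k+1})-C(\mathbf{q}_{k+1}))^{+}|\mathcal{F}_k]$: showing it is $o(1)$ as $\rho\to0$ is exactly the assertion that the one-step-look-ahead rule is asymptotically optimal. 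I would attack it either by a monotonicity argument — verifying that the one-step net gain of continuing, $h(\mathbf{q}_k)-c_1(1-\sum_i\pi_{0,k|i}p_k^{i})-\mathbb{E}[h(\mathbf{q}_{k+1})|\mathcal{F}_k]$, is to leading order in $\rho$ nonincreasing along the dynamics once it turns negative, so the stopping region is asymptotically absorbing and one-step-look-ahead is exactly optimal there — or by a direct estimate that past the myopic boundary $h(\mathbf{q}_{k+1})\le C(\mathbf{q}_{k+1})$ off an event of vanishing probability, so the regret integrand vanishes. This step is the delicate one; everything before it is bookkeeping built on the partition identity, \eqref{transition}, and Lemma~\ref{lemma3}.
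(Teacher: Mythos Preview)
Your route is a genuine alternative to the paper's argument, and the two share the same core computations but organize the asymptotics differently. You work in the infinite-horizon formulation and replace $A(\mathbf{q}_k)$ by the one-step surrogate $\mathbb{E}[h(\mathbf{q}_{k+1})\mid\mathcal{F}_k]$; the paper instead works from the finite horizon $T$ backward, computing $A_{T-1}^{T}(\mathbf{q}_{T-1})$ explicitly, then introduces the auxiliary processes
\[
\mathbf{\Phi}_k=\frac{1}{1+\rho\sum_{j\ge1}V_{k,j}}+c_2\bigl(1-p_k^{i_k}\bigr)-J_k^{T}(\mathbf{q}_k),\qquad
\mathbf{\Psi}_k=A_k^{T}(\mathbf{q}_k)-\frac{1-\rho}{1+\rho\sum_{j\ge1}V_{k,j}}-c_2\bigl(1-p_k^{i_k}\bigr),
\]
and carries out the inductive scheme of Raghavan--Veeravalli to obtain $\lim_{\rho\to0}\mathbf{\Psi}_{T-k}/\rho\le0$; this is exactly what forces the stopping threshold to collapse to $1/c_1$. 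The ingredients you isolate---the identity $\mathbb{E}[\sum_i\pi_{0,k+1|i}p_{k+1}^{i}\mid\mathcal{F}_k]=(1-\rho)\sum_i\pi_{0,k|i}p_k^{i}$ and the supermartingale bound of Lemma~\ref{lemma3}---are the same ones the paper uses at the base step of that induction; the paper even makes the same simplification you do, replacing the supermartingale inequality by an equality before proceeding.

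Where your proposal is incomplete is precisely the step you flag as ``the delicate one'': controlling the continuation regret $\mathbb{E}[(h(\mathbf{q}_{k+1})-C(\mathbf{q}_{k+1}))^{+}\mid\mathcal{F}_k]$, or equivalently showing that the myopic rule is asymptotically exact. The paper's $\mathbf{\Phi}_k,\mathbf{\Psi}_k$ induction is its mechanism for this; it propagates the bound from the terminal time backward and tracks the $\rho$-order of the discrepancy at each step. Your two suggested attacks (monotonicity of the one-step net gain, or a direct ``absorbing region'' estimate) are plausible but neither is carried out, and the monotonicity claim in particular requires care because $\mathbf{q}_k$ lives in a high-dimensional simplex and the dynamics \eqref{transition} mix several coordinates. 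If you pursue your route, the cleanest completion is probably to mimic the paper's induction in your notation: set $\Psi_k=\mathbb{E}[h(\mathbf{q}_{k+1})\mid\mathcal{F}_k]-A(\mathbf{q}_k)$ and show inductively, using your two displayed identities together with Lemma~\ref{lemma3}, that $\Psi_k/\rho$ stays bounded above by a quantity that vanishes as $\rho\to0$. What your approach buys is a more transparent derivation of the threshold itself; what the paper's buys is a ready-made handle on the error term via the Raghavan--Veeravalli machinery.
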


\begin{proof}
The proof follows closely with the proof of Theorem 2 of
\cite{vvv:MarkovArray}. First, we have
\begin{align}\label{A(T-1)}
&A_{T-1}^{T}(\mathbf{q}_{T-1})=\mathbb{E}[J_T^{T}(\mathbf{q}_T)|\mathcal{F}_{T-1}]\nonumber\\
&=\int J_T^{T}[g(\mathbf{q}_{T-1},\mathbf{z}_T)]f(\mathbf{z}_T|\mathcal{F}_{T-1})d\mathbf{z}_T\nonumber\\
&=\int\left(\sum\limits_{i = 1}^N {{\pi _{0,T\left| i \right.}}}
p_T^i\right)f(\mathbf{z}_T|\mathcal{F}_{T-1})d\mathbf{z}_T\nonumber\\
&~~~~+\int\left[
c_2\left(1-p_T^{i_T}\right)\right]f(\mathbf{z}_T|\mathcal{F}_{T-1})d\mathbf{z}_T.
\end{align}
where $i_k=\arg\max\limits_{1\leq i \leq
N}\{p_{k}^{1},...,p_{k}^{N}\}$.

For the first part of (\ref{A(T-1)}), after interchanging the
integral and sum, by using (\ref{pi_jki}), (\ref{transition}),
(\ref{p_ki}), and (\ref{f_ki}), we have
\begin{align}\label{firstpart}
&\sum\limits_{i = 1}^N\int\left( {{\pi _{0,T\left| i \right.}}}
p_T^i \right)
f(\mathbf{z}_T|\mathcal{F}_{T-1})d\mathbf{z}_T\nonumber\\
&=\sum\limits_{i = 1}^N\int \frac{(1-\rho){\pi _{0,T-1\left| {i}
\right.}} {f({{\bf{z}}_T}\left| {{{\cal F}_{T - 1}},} \right.{S} =
i)}}{\sum\limits_{j = 0}^{M} f({{\bf{z}}_T}\left| {{{\cal F}_{T -
1}},T_{j,T}}
 \right.,S=i)P[T_{j,T}| {{{\cal F}_{T - 1}},S = i }]} \nonumber\\
&~~~~~\cdot\frac{{f({{\bf{z}}_T}\left| {{{\cal F}_{T - 1}},{S} = i}
\right.)p_{T-1}^{i}}}{{\sum\limits_{n = 1}^N {f({{\bf{z}}_T}\left|
{{{\cal F}_{T - 1}},{S} = n} \right.)P[{S} = n\left| {{{\cal F}_{T -
1}}} \right.]} }}
f(\mathbf{z}_T|\mathcal{F}_{T-1})d\mathbf{z}_T\nonumber\\
&=(1-\rho)\sum\limits_{i = 1}^N{{\pi _{0,T-1\left| i \right.}}}
p_{T-1}^i.
\end{align}

For the second part of (\ref{A(T-1)}), according to
Lemma~\ref{lemma3},
\begin{equation}\label{p2}
\int\left[
c_2\left(1-p_T^{i_T}\right)\right]f(\mathbf{z}_T|\mathcal{F}_{T-1})d\mathbf{z}_T
\leq c_2(1-p_{T-1}^{i_{T-1}}).
\end{equation}

Plugging the above two results (\ref{firstpart}) and (\ref{p2}) into
(\ref{A(T-1)}), we have
\begin{equation}\label{inequ}
A_{T-1}^{T}(\mathbf{q}_{T-1}) \leq (1-\rho)\sum\limits_{i =
1}^N{{\pi _{0,T-1\left| i \right.}}}
p_{T-1}^i+c_2(1-p_{T-1}^{i_{T-1}}).
\end{equation}

In the sequel, we assume that $A_{T-1}^{T}(\mathbf{q}_{T-1})$ equals
to the right side of (\ref{inequ}). 

According to (\ref{cost-to-go}), we have if $0\leq \sum\limits_{i =
1}^N{{\pi _{0,T-1\left| i \right.}}} p_{T-1}^i \leq
\frac{c_1}{c_1+\rho}$,
\[J_{T-1}^{T}(\mathbf{q}_{T-1})=\sum\limits_{i = 1}^N{{\pi _{0,T-1\left| i \right.}}} p_{T-1}^i+c_2 (1-p_{T-1}^{i_{T-1}}).\]
If $\frac{c_1}{c_1+\rho}\leq \sum\limits_{i = 1}^N{{\pi
_{0,T-1\left| i \right.}}} p_{T-1}^i \leq 1$,
\[J_{T-1}^{T}(\mathbf{q}_{T-1})=c_1+(1-\rho-c_1)\sum\limits_{i = 1}^N{{\pi _{0,T-1\left| i \right.}}}
p_{T-1}^i+c_2(1-p_{T-1}^{i_{T-1}}).\] We define the following
transformation as
\begin{align}\label{V_kl}
&V_{k,l}=\frac{\sum\limits_{i = 1}^N{{\pi _{l,k\left| i \right.}}}
p_{k}^i}{\rho\sum\limits_{i = 1}^N{{\pi _{0,k\left| i \right.}}}
p_{k}^i},~l=0,1,...,M-1.\nonumber\\&\Rightarrow\sum\limits_{j =
0}^{M-1}V_{k,j}=\frac{1}{\sum\limits_{i = 1}^N{{\pi _{0,k\left| i
\right.}}} p_{k}^i},V_{k,0}=\frac{1}{\rho}.
\end{align}
Then
\begin{equation}
{\sum\limits_{i = 1}^N{{\pi _{l,k| i }}} p_{k}^i}=\rho
V_{k,l}{\sum\limits_{i = 1}^N{{\pi_{0,k|i}}}
p_{k}^i}=\frac{V_{k,l}}{\sum\limits_{j =
1}^{M}V_{k,j}},~l=0,1,...,M-1.
\end{equation}
Further we have
\begin{align}
{\sum\limits_{i = 1}^N{{\pi _{l,k\left| i \right.}}}p_{k}^i}
=\frac{{\rho} V_{k,l}}{1+{\rho}{\sum\limits_{j = 1}^{M-1}V_{k,j}}},
~l=0,1,...,M-1
\end{align}
and
\begin{equation}
{\sum\limits_{i = 1}^N{{\pi _{0,k\left| i \right.}}}
p_{k}^i}=\frac{1}{1+{\rho}{\sum\limits_{j = 1}^{M-1}V_{k,j}}}.
\end{equation}
Then, $J_{T-1}^{T}$ can be rewritten as
\begin{align}\label{JTminus2}
&J_{T-1}^{T}(\mathbf{q}_{T-1})  = \nonumber\\
&\left\{ \begin{array}{l} \frac{1}{1+{\rho}{\sum\limits_{j =
1}^{M-1}V_{T-1,j}}}+c_2 (1-p_{T-1}^{i_{T-1}}),~\sum\limits_{j = 1}^{M-1} {V_{T-1,j}} \geq \frac{1}{c_1} \\
\frac{1-\rho+\rho c_1 {\sum\limits_{j = 1}^{M-1}V_{T-1,j}}
}{1+{\rho}{\sum\limits_{j = 1}^{M-1}V_{T-1,j}}}
+c_2(1-p_{T-1}^{i_{T-1}}),~\sum\limits_{j = 1}^{M-1} {V_{T-1,j}}
\leq \frac{1}{c_1}
 \end{array} \right..
\end{align}
We define $\mathbf{\Phi}_k$ and $\mathbf{\Psi}_k$ as
\begin{equation}
\mathbf{\Phi}_k\triangleq \frac{1}{1+{\rho}{\sum\limits_{j =
1}^{M-1}V_{k,j}}}+c_2 (1-p_{k}^{i_{k}})-J_k^T(\mathbf{q}_k),~0\leq k
\leq T,
\end{equation}
\begin{equation}
\mathbf{\Psi}_k\triangleq
A_k^T(\mathbf{q}_k)-\frac{1-\rho}{1+{\rho}{\sum\limits_{j =
1}^{M-1}V_{k,j}}}-c_2 (1-p_k^{i_k}),~0\leq k \leq T.
\end{equation}
Then straightly we see that $\mathbf{\Phi}_T=0$,
$\mathbf{\Psi}_{T-1}=0$, and
\begin{equation}
\mathbf{\Phi}_{T-1}=\left[\rho \frac{1-c_1 {\sum\limits_{j =
1}^{M-1}V_{T-1,j}}}{1+{\rho}{\sum\limits_{j = 1}^{M-1}V_{T-1,j}}}
\right]\mathbb{I}\left(\left\{\sum\limits_{j = 1}^{M-1} {V_{T-1,j}}
\leq \frac{1}{c_1}\right\}\right).
\end{equation}

For the next steps, we follow the proof of Theorem 2 of
\cite{vvv:MarkovArray}, which is skipped here. And additionally we
use Lemma~\ref{lemma3}.
Finally, it can be derived that
\begin{equation}
\lim\limits_{\rho\rightarrow0}
\frac{\mathbf{\Psi}_{T-k}}{\rho}\leq0,1\leq k \leq T.
\end{equation}

And the test structure reduces to stopping when
\begin{equation}
\sum\limits_{j = 1}^{M-1} {V_{k,j}} \geq \frac{1}{c_1}
\frac{1-\frac{\mathbf{\Psi}_{k}}{\rho
}}{1+\frac{\mathbf{\Psi}_{k}}{c_1 }}\geq \frac{1}{c_1}
\end{equation}

Therefore, we have the structure of the stopping rule as stated in
Theorem 3.
\end{proof}

Regarding to Theorem 3, we have several notes as follows.

1) From Lemma~\ref{lemma3} and Theorem~\ref{thm2}, we see that
$\tau_{app}$ is a lower bound of the optimal stopping time, i.e.
$\tau_{app}\leq \tau_{opt}$, in the case of $\rho\rightarrow0$. The
supermartingale property shown in Lemma~\ref{lemma3} plays an
important role in deriving $\tau_{app}$. The tightness of this lower
bound is related to the relationship between $\max\limits_{1\leq i
\leq N}p_{k}^{i}$ and $\mathbb{E}\left\{\max\limits_{1\leq i \leq
N}p_{k+1}^{i}|\mathcal{F}_{k}\right\}$. The simulation results in
Section~\ref{sec:num} show that $\max\limits_{1\leq i \leq
N}p_{k}^{i}$ and $\mathbb{E}\left\{\max\limits_{1\leq i \leq
N}p_{k+1}^{i}|\mathcal{F}_{k}\right\}$ are quite close, which
indicates that $\tau_{app}$ would be close to $\tau_{opt}$.


2) From (\ref{eq1_lm1}) and (\ref{V_kl}), we have the testing
statistic $\sum\limits_{j = 1}^{M-1} {V_{k,j}}$ as
\begin{align}
    \sum\limits_{j = 1}^{M-1} {V_{k,j}}=\frac{\sum\limits_{j = 1}^{M-1}\sum\limits_{i = 1}^N{{\pi _{j,k\left| i \right.}}}
p_{k}^i}{\rho\sum\limits_{i = 1}^N{{\pi _{0,k\left| i
\right.}}}p_{k}^i}&=\frac{1-\sum\limits_{i = 1}^N{{\pi _{0,k\left| i
\right.}}}p_{k}^i}{\rho\sum\limits_{i = 1}^N{{\pi _{0,k\left| i
\right.}}}p_{k}^i}\nonumber\\
&=\frac{1-P[\Gamma  > k\left| {{{\cal F}_k}} \right.]}{\rho P[\Gamma
> k\left| {{{\cal F}_k}} \right.]}.
\end{align}
This structure conforms to the well-known Shiryaev's
procedure~\cite{Mous:state-of-art}, which is the optimal stopping
rule for single sensor with IID observations and Bayesian setting.

\section{Numerical Simulation}\label{sec:num}
\begin{figure}
\begin{center}
\includegraphics[width=0.9\linewidth]{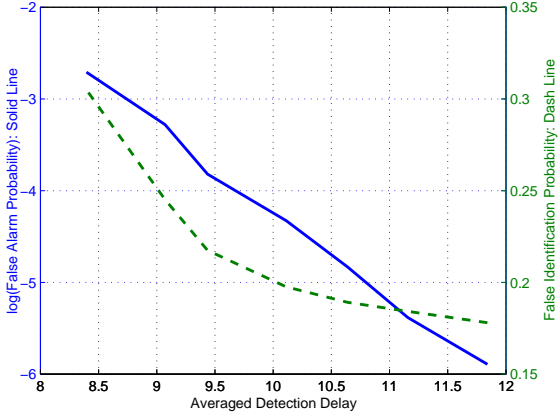}
\caption{False alarm and false identification probability vs.
averaged detection delay}\label{PFAvsADD}
\end{center}
\end{figure}

Given that it is hard to efficiently compute the solution structure
in (\ref{stoprule}), we compute the approximate optimal stopping
rule in (\ref{appro_stoppingrule}) and simulate its performance. We
assign 5 nodes constructing a linear sensor array and assume that
$f_0\sim \mathcal{N}(0,1)$ and $f_1\sim \mathcal{N}(1,1)$. The
change point time is generated according to the geometric
distribution with $\rho=0.01$, $\rho_1=0.1$ and $\rho_2=0.05$,
respectively. According to (\ref{eq1_lm1}), the false alarm
probability with $\tau_{app}$ is
\begin{equation}
    P[\tau_{app}\le \Gamma]=\mathbb{E}\left\{{\sum\limits_{i = 1}^N{{\pi _{N+1,\tau_{app}\left| i \right.}}}
p_{\tau_{app}}^i}\right\}\leq\frac{c_1}{c_1+\rho}=\alpha.
\end{equation}
Thus we have ${c_1}=\frac{\rho \alpha}{1-\alpha}$, where $\alpha$ is
the maximum allowance for the false alarm probability, which could
determine the required select $c_1$ value.

\begin{figure}
\begin{center}
\includegraphics[width=0.9\linewidth]{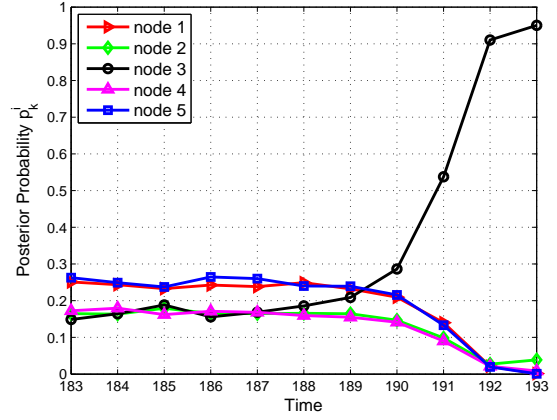}
\caption{Posterior probability of which node that the change pattern
first reaches}\label{p_figure}
\end{center}
\end{figure}

In Fig.~\ref{PFAvsADD}, we illustrate the relationships among the
false alarm probability, the false identification probability, and
the averaged detection delay. We see that as the averaged detection
delay increases, the false alarm probability decreases. When the averaged detection delay becomes large, 
the false identification probability does not decrease much and a probability floor appears, which is due to the fact that only the samples between the time when the change pattern reaches the first sensor and the time when it reaches the second sensor can be used to effectively distinguish the sensor that the change pattern first reaches. Since this part of the samples is limited, which will not increase with the detection delay, a false identification probability floor exists.  In Fig.~\ref{p_figure}, we draw the
posterior probability ${\bf{p}}_k^i$ over time, where we assume that
the change pattern first reaches node 3, and then propagates to node
4. We see that as time goes, $p_k^3$ gradually becomes larger than
the others, which indicates that node 3 should be identified. In
Fig.~\ref{bound_figure}, we show the relation between
$\max\{p_{k}^{1},...,p_{k}^{N}\}$ and
$\mathbb{E}\{\max\{p_{k+1}^{1},...,p_{k+1}^{N}\}|\mathcal{F}_{k}\}$
in (\ref{p2}). Since (\ref{p2}) is the key in deriving the our
simplified rule, the fact that these two curves are close suggests
that the performance of our low-complexity rule might be close to
that of the optimal stopping rule in (\ref{stoprule}) and
(\ref{stoprule_infi}).
\begin{figure}
\begin{center}
\includegraphics[width=0.9\linewidth]{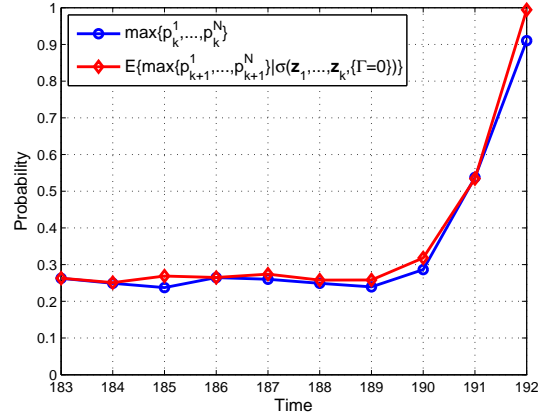}
\caption{$\max\limits_{1\leq i \leq N}p_{k}^{i}$~vs.~$
\mathbb{E}\{\max\limits_{1\leq i \leq
N}p_{k+1}^{i}|\mathcal{F}_{k}\}$}\label{bound_figure}
\end{center}
\end{figure}
\section{Conclusions}\label{sec:con}
We have studied the quickest change point detection problem and the
closest-node identification problem over a sensor array. We have
proposed an optimal decision scheme combing the stopping rule and
the identification rule to alarm the change happening and to
determine the sensor closest to the change source. Since the
structure the obtained optimal scheme is complex and impractical to
implement, we have further proposed a scheme with a much simpler
structure.

\end{document}